\newtheorem{prop}{Proposition}
\newtheorem{thm}{Theorem}
\newtheorem{defi}{Definition}
\newtheorem{cor}{Corollary}
\newtheorem{lem}{Lemma}
\newcommand{\prt}[1]{\left(#1\right)}
\newcommand{\R}{\mathbb{R}}
\newcommand{\E}{\mathbb{E}}
\newcommand{\Ep}[1]{\mathbb{E}\left[#1\right]}
\newcommand{\quotes}[1]{``#1''}
\title{\LARGE \bf
Lower bound performances for average consensus\\ in open multi-agent systems (extended version)
}
\author{Charles Monnoyer de Galland and Julien M. Hendrickx
\thanks{ICTEAM institute, UCLouvain (Belgium). J.H. is also with the CISE, Boston University (USA). This work was supported by \quotes{Communaut\'e fran\c{c}aise de
Belgique - Actions de Recherche Concert\'ees}. C.M. is a FRIA fellow (F.R.S.-FNRS), and J.H. holds a WBI.World excellence fellowship. Email adresses: \texttt{charles.monnoyer@uclouvain.be}, \texttt{julien.hendrickx@uclouvain.be}.
}%
}
\begin{document}

\maketitle
\thispagestyle{empty}
\pagestyle{empty}

\begin{abstract}
We derive fundamental limitations on the performances of intrinsic averaging algorithms in open multi-agent systems, which are systems subject to random arrivals and departures of agents. Each agent holds a value, and their goal is to estimate the average of the values of the agents presently in the system. We provide a lower bound on the expected Mean Square Error for any estimation algorithm, assuming that the number of agents remains constant and that communications are random and pairwise. Our derivation is based on the expected error obtained with an optimal algorithm under conditions more favorable than those the actual problem allows, and relies on an analysis of the constraints on the information spreading mechanisms in the system, and relaxations of these.
\end{abstract}

\section{Introduction}
\label{Sec:Intro}

Multi-agent systems show great benefits for modeling and solving problems in various domains including sensor networks \cite{ApMAS:WSN,ApMAS:TrackingWSN}, vehicle coordination \cite{ApMAS:vehicle}, or social phenomena \cite{ApMAS:CTAvgOpinionDynamics,ApMAS:Opinion&Confidence}. Among the most cited properties of multi-agent systems are their flexibility, their scalability, or their robustness. Yet, most results around multi-agent systems stand for asymptotic properties under the convenient assumption that their composition remains unchanged. The increasing size of the systems challenges this assumption as it implies slower processes and higher probabilities of arrivals and departures, making those non-negligible. It is also challenged by the chaotic nature of some systems, where communications can be difficult, or happen at a time-scale highly comparable to that of the arrivals and departures: vehicles can for instance share a stretch of road before heading to different destinations in collaborative multi-vehicles systems. 

All those reasons bring up the emergence of the study of \emph{Open Multi-Agent Systems}. Results about closed systems do not easily extend to open ones: repeated arrivals and departures imply important differences in the design and analysis of such systems, and result in several challenges, see e.g. \cite{OMAS:Gossip,OMAS:MAX}. First, with the frequent arrivals and departures of agents, the size of an open system changes with time, making the analysis of its state challenging. Moreover, incessant perturbations impact the state, but also in some cases the objective pursued by the agents of the system: algorithms then have to adapt to that variable objective, making their design challenging, and the usual convergence cannot be achieved anymore.

\subsection{State of the art}

There is little theoretical analysis of open multi-agent systems, as most multi-agent results rely on the assumption that systems are closed. However, some results considered arrivals and departures, such as simulation-based analyses of social phenomena performed in \cite{OMAS:sociophysics}. Also, it has been shown in \cite{OMAS:Gossip} that systems subject to gossip interactions in open features can be analyzed through size-independent descriptors, whose evolution described by a dynamical system is shown to asymptotically converge to some steady states. Moreover, algorithm design has been explored for MAX-consensus problems with arrivals and departures in \cite{OMAS:MAX} through additional variables, and where performance was measured by the probability for the estimate to eventually converge if the system closes. Similarly, THOMAS architecture was designed to maintain connectivity into P2P networks \cite{OMAS:OpenP2P}. Openness was also considered in applications such as VTL for autonomous cars to deal with cross-sections \cite{ApOMAS:VTL}.

Nevertheless, up to now, efficiently studying performances of algorithms and analyzing open systems remains a challenge in general. As agents cannot instantaneously react to perturbations, obtaining an exact result is often out of range. Hence, the goal should be to remain near a time-varying objective, and usual convergence is no more relevant to be studied as it cannot be achieved. A step towards understanding open systems is the derivation of \textit{fundamental performance limitations}: lower bounds on the performances that are possible to achieve. Through those limitations, one can obtain some quality criterion for algorithm design in open systems, but also get a better understanding of the possible bottlenecks that could arise.

\subsection{Contribution}
We establish fundamental performance limitations in open multi-agent systems for intrinsic averaging consensus problems (\textit{i.e.} estimating the average of all intern values owned by agents \textit{present in the system at that time}), where information is exchanged between agents through random pairwise communications. In this analysis, we focus on systems of fixed size: we assume that each departure of an agent is instantaneously followed by an arrival, so that \emph{only replacements occur, and the system size remains constant}, see Section \ref{Sec:Statement} for a formal definition. The approach used for deriving the fundamental limitations is detailed in Section~\ref{Sec:Methodology}, and consists in evaluating the performances of an algorithm that is provably optimal under favorable settings where agents have access to more resources than decentralized problems typically allow. These performances depend on some complicated distribution related to information propagation. Bounds on this distribution were obtained by considering relaxed conditions for the information exchange, namely the Ping model in Section~\ref{Sec:Ping} and the Infection model in Section~\ref{Sec:Infection}, to properly define the performance limitations.

Since consensus problems are commonly a building block for various multi-agent complex applications (such as decentralized optimization \cite{DO:dualAvg} or several control applications \cite{ApMAS:vehicle}), we expect the techniques we use for deriving those limitations to be extendable to more advanced tasks on open multi-agent systems.

\section{Problem statement}
\label{Sec:Statement}

\subsection{System description}

We consider a set of $N$ agents, labelled from $1$ to $N$. Each agent $i \in \{1\ldots N\}$ holds an i.i.d. intrinsic value $x_i(t) \in \R$ randomly selected from a distribution which we assume without loss of generality to be zero mean and of variance $\E x_i^2 = \sigma^2$. This value remains constant except at a replacement, which we model as the complete erasure of the agent's memory and the attribution of a new value $x_i(t)$ drawn from the same distribution. Those replacements happen according to a Poisson clock defined by an individual replacement rate $\lambda_r$, so that on average $N\lambda_r$ replacements happen in the whole system per unit of time.

Agents interact \textit{via} pairwise communications: every two agents interact with each other at random times determined by a Poisson clock with rate $\lambda_c$. There are thus on average $\frac{N(N-1)}{2}\lambda_c$ communications taking place in the system per unit of time. Agents are deterministic and have unlimited memory and computation power and there is no restriction on the size or nature of the messages they can exchange. They have a unique identifier that they can use and they have access to a common universal time. The way we model replacements as a memory erasure also means agents know the correspondence between the agents having left and those having replaced them since they share the same label. We assume moreover that they know the number of agents $N$, the parameters $\lambda_r$, $\lambda_c$, and the distribution of the $x_i$ (although we will see that the results would be the same if they knew only the expected value of $x_i$, assumed here to be $0$). Agents have thus access to significantly more information than in many works on multi-agent systems, but our lower bounds on the algorithm performance will of course also apply to these more usual situations, since our setting allows implementing any algorithm that could be implemented under more restrictive conditions.

\subsection{Objective}

In intrinsic averaging, agents try to estimate the average of the intern values of the agents present in the system at that time, denoted $\bar{x}(t)$. For that purpose, every agent $i$ maintains its own estimate $y_i(t)$ of that average based on its knowledge about the system, and updates it in continuous time. Under ideal conditions, one would expect that estimate to become $y_i(t) = \bar{x}(t) = \tfrac{1}{N}\sum_{j=1}^N x_j(t)$.

This is not achievable in open systems because of the variable objective and the delays to transmit information in the system, leading to the absence of usual convergence. 
We thus need a quantitative measure of the \textit{performance} of an algorithm solving this problem in open systems, and choose the classical Mean Square Error (MSE) of the estimation of all the agents at time $t$, presented in equation (\ref{eq:stt:avg_square_error}).
\begin{equation}
    \label{eq:stt:avg_square_error}
    C(t) := \tfrac{1}{N}\sum\nolimits_{j=1}^N (\bar{x}(t)-y_j(t))^2
\end{equation}

In this analysis, we focus on the steady state-error: we assume the system has been running since $-\infty$, so that the effect of initial conditions has disappeared, and $\Ep{C(t)}$ can be considered independent of the time (one can verify that our processes remain well defined). We thus derive fundamental performance limitations for all algorithms that can be implemented in our setting as a (time-invariant) lower bound on (\ref{eq:stt:crit_expected_value}). 
\begin{equation}
    \label{eq:stt:crit_expected_value}
    \Ep{C(t)} = \Ep{(\bar{x}(t)-y_i(t))^2}
\end{equation}

\subsection{Preliminary notions}
\label{sec:stt:defs}

Before stating our first result, we need to introduce certain notions related to the information available to an agent when computing its estimate, and to the age of its information.

We first formalize in the next definition all the information about events and values $x_j$ to which an agent $i$ could possibly access, and thus influence its estimate $y_i(t)$.

\begin{defi}
    \label{def:knowledge_set}
    At its arrival at time $t$, an agent $i$ whose value is set to $x_i$ owns a \textit{knowledge set} $\omega_i(t)$ such that $\omega_i(t^+)~=~\{i,x_i,t\}$. If agents $i$ and $j$ interact at time $\tilde t$, then $\omega_i(\tilde t^+)~=~\omega_j(\tilde t^+)~=~\omega_i(\tilde t^-)~\cup~\omega_j(\tilde t^-)~\cup~\{i~\hbox{-}~j;\tilde t, x_i(\tilde t),x_j(\tilde t)\}$.
\end{defi}
\vspace{0.1cm}

The last expression of the union defining an interaction denotes that $i,j$ have interacted at time $\tilde t$ and confirms their values at that time.\footnote{This confirmation is included for simplicity, but is actually redundant, as the values at that time could be obtained from $\omega_i(\tilde t^-)$ and $\omega_j(\tilde t^-)$.} 
In other words, the knowledge sets after interaction consist of the union of the knowledge sets, to which is added the information about the interaction and confirmation of the values. 
Standard results in distributed computation show that \textit{any estimate $y_i(t)$ that an agent can compute in our settings can actually be computed based only on $\omega_i(t)$ and the time $t$.}

Observe that $\omega_i(t)$ may contain various values $x_j(t')$ for different $t'<t$. However, we will see later that only the most recent known values about the other agents are needed to build the estimate $y_i(t)$.
\begin{defi}
    \label{def:most_recent_info}
    The \textit{most recent value} known by the agent $i$ about $j$ at time $t$ is denoted $\tilde x_j^{(i)}(t) = x_j\big(\tilde t_j^{(i)}(t)\big)$, where $\tilde t_j^{(i)}(t) := \max \{s : x_j(s)\in\omega_i(t)\}$ is 
    the time at which that most recent value was first obtained. We denote the \textit{age of that information} by $T_j^{(i)}(t) := t-\tilde t_j^{(i)}(t)$.
\end{defi}
\vspace{0.1cm}

By convention, if no value $x_j$ lies in $\omega_i(t)$, we set $x_j$ to $0$ and the corresponding age to $+\infty$. Moreover, the estimate of an agent about itself is always correct, and the corresponding age of the information is $0$.

In steady state, and due to the symmetry between the agents, the distribution of $T_j^{(i)}(t)$ is independent of $i$, $j$ and~$t$. Hence, all agents share a common \textit{cdf} and \textit{pdf} for that variable, respectively denoted $F(s)$ and $f(s)$. Moreover, with a small abuse of language, we say that another \textit{pdf} $f^*(s)$ \textit{bounds} $f(s)$ when the corresponding \textit{cdf} $F^*(s)$ satisfies $F^*(s)\geq F(s) \ \forall s$, which is always satisfied for a random variable $T^* \leq_{st} T$, in the \textit{usual stochastic order} \cite{MISC:UsualStochasticOrder}.

\section{Bound in terms of the information spreading mechanism}
\label{Sec:Methodology}

We present in Theorem \ref{thm:general_bnd_pdf} a general lower bound on (\ref{eq:stt:crit_expected_value}) depending on the way information spreads in the system. To properly define a bound, we will thus need to find relaxations of that information spreading to instantiate the expression.

\vspace{0.1cm}
\begin{thm}
    \label{thm:general_bnd_pdf}
    The performances of any algorithm in the setting defined in Section~\ref{Sec:Statement} are bounded from below as
    \begin{equation}
        \label{eq:thm:general_bnd_pdf}
        \Ep{C(t)} \geq \frac{N-1}{N^2}\int_0^\infty f^*(t) \prt{1-e^{-2\lambda_r t}} \sigma^2 \mathrm{d}t
    \end{equation}
    where $f^*(t)$ bounds $f(t)$ as defined in Section~\ref{sec:stt:defs}.
\end{thm}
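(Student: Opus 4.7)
The strategy is to lower bound the MSE of any algorithm by the Bayes (MMSE) error attainable given each agent's full knowledge set, evaluate this error in closed form, and then pass from the true age distribution $f$ to the stochastic bound $f^*$. Since any estimate $y_i(t)$ is a deterministic function of $\omega_i(t)$ and $t$ and the conditional expectation is $L^2$-optimal,
\begin{equation*}
\Ep{(\bar x(t) - y_i(t))^2} \geq \Ep{\mathrm{Var}(\bar x(t)\mid\omega_i(t),t)},
\end{equation*}
so the task reduces to lower bounding this conditional variance.

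The first step invokes the conditional independence of the $x_j(t)$'s given $\omega_i(t)$: the replacement processes and the fresh value draws are independent across agents, and aggregation of information through pairwise communications preserves this, so
\begin{equation*}
\mathrm{Var}(\bar x(t)\mid\omega_i(t)) = \tfrac{1}{N^2}\sum_{j\neq i}\mathrm{Var}(x_j(t)\mid\omega_i(t)),
\end{equation*}
the $j=i$ term vanishing since agent $i$ knows its own value. The key second step is to argue that, for each $j\neq i$, the conditional law of $x_j(t)$ given $\omega_i(t)$ is determined solely by $\prt{\tilde x_j^{(i)}(t),\,T_j^{(i)}(t)}$: by the memorylessness and independence of the replacement Poisson clock, conditionally on $T_j^{(i)}(t)=T$ and $\tilde x_j^{(i)}(t)=\tilde x$ one has $x_j(t)=\tilde x$ with probability $e^{-\lambda_r T}$, and otherwise $x_j(t)$ is a fresh i.i.d.\ draw (mean $0$, variance $\sigma^2$) independent of $\tilde x$. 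A direct calculation yields $\mathrm{Var}(x_j(t)\mid\tilde x,T) = (e^{-\lambda_r T}-e^{-2\lambda_r T})\tilde x^2 + (1-e^{-\lambda_r T})\sigma^2$, which after integrating out $\tilde x$ (using $\Ep{\tilde x^2\mid T}=\sigma^2$ in steady state) collapses to $\sigma^2\prt{1-e^{-2\lambda_r T}}$.

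Summing over the $N-1$ terms $j\neq i$ and averaging over the age density $f$ gives
\begin{equation*}
\Ep{C(t)} \geq \tfrac{N-1}{N^2}\,\sigma^2 \int_0^\infty f(t)\prt{1-e^{-2\lambda_r t}}\mathrm{d}t.
\end{equation*}
To finish, I observe that $g(t):=1-e^{-2\lambda_r t}$ is increasing in $t$ and that the assumption $F^*(s)\geq F(s)$ is exactly $T^*\leq_{st}T$ in the usual stochastic order, so $\Ep{g(T^*)}\leq\Ep{g(T)}$; replacing $f$ by $f^*$ therefore only shrinks the integral and the inequality is preserved, yielding the stated bound.

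The main obstacle is the reduction used in the second step: that nothing in $\omega_i(t)$ beyond $\prt{\tilde x_j^{(i)}(t),T_j^{(i)}(t)}$ contains additional information about $x_j(t)$. One has to verify (i) that any older value $x_j(s)$ with $s<\tilde t_j^{(i)}(t)$ is made irrelevant once the more recent one is known, and (ii) that neither timestamps of interactions among third parties nor the mere absence of newer updates on $j$ in agent $i$'s recent exchanges can be informative about a replacement of $j$ in $[\tilde t_j^{(i)}(t),\,t]$. This ultimately rests on the mutual independence of the replacement Poisson clocks, the communication Poisson clocks, and the i.i.d.\ value draws, together with the fact that interactions of $j$ with third parties generate a more recent entry in $\omega_i(t)$ only when the corresponding value is actually transmitted back to $i$.
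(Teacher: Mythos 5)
Your proof is correct and follows essentially the same route as the paper: optimality of the conditional expectation, decomposition across agents by independence, the per-agent error $\big(1-e^{-2\lambda_r T_j^{(i)}}\big)\sigma^2$, integration against $f$, and the passage to $f^*$ via stochastic dominance together with monotonicity of $1-e^{-2\lambda_r t}$ (the paper's Propositions 1--6). The only differences are presentational --- you compute the posterior variance directly and explicitly integrate out $\tilde x_j^{(i)}$, and you flag the sufficiency of $\big(\tilde x_j^{(i)},T_j^{(i)}\big)$ as the point needing verification, which the paper simply asserts --- but the substance is identical.
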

\vspace{0.1cm}

The bound derived in the above theorem is obtained by studying an optimal algorithm in the setting of Section~\ref{Sec:Statement} (Section \ref{sec:cpt:optimal_algo}). It relies on the decomposition of the contributions of all the agents for the estimation of the average to obtain the MSE for some knowledge (\Cref{sec:cpt:decomposition,sec:cpt:single_agent_estimate,sec:cpt:indiv_error}), and then on the derivation of a steady state through the analysis of the way information spreads in the system (\Cref{sec:cpt:global_error,sec:cpt:pdf}).

\subsection{Optimal algorithm definition}
\label{sec:cpt:optimal_algo}

We first build an algorithm that is optimal for solving intrinsic averaging based on a knowledge set.

\begin{prop}
    \label{prop:best_algo}
    In the setting described in \Cref{Sec:Statement}, the following estimate is optimal in the sense of criterion (\ref{eq:stt:crit_expected_value}).
    \begin{equation}
        \label{eq:prop:best_algo}
        y_i(t) = \Ep{\bar{x}(t)|\omega_i(t)} = \tfrac{1}{N}\sum\nolimits_{j=1}^N \Ep{x_j(t)|\omega_i(t)}
    \end{equation}
\end{prop}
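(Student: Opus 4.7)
The plan is to reduce the statement to the classical characterization of conditional expectation as the $L^2$-optimal predictor. I would first recall the observation made just after Definition~\ref{def:knowledge_set}: every estimate $y_i(t)$ the algorithm can output for agent $i$ is, by the standard distributed-computation argument, some deterministic function of the pair $(\omega_i(t), t)$, hence a $\sigma(\omega_i(t))$-measurable random variable. Choosing an algorithm therefore amounts to choosing, for each $i$ independently, such a measurable function.

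Next, I would rewrite the criterion as $\Ep{C(t)} = \tfrac{1}{N}\sum_{j=1}^N \Ep{\prt{\bar{x}(t)-y_j(t)}^2}$. Because the $N$ choices $y_1,\ldots,y_N$ are constrained only separately (each $y_j$ may depend only on $\omega_j(t)$, with no cross-coupling between agents), minimizing the sum is equivalent to minimizing each term in isolation. Fixing $i$ and applying the standard Hilbert-space projection theorem to $X = \bar{x}(t) \in L^2$ with respect to the $\sigma$-algebra generated by $\omega_i(t)$ yields that the unique (up to almost-sure equality) minimizer of $\Ep{\prt{X-\hat{X}}^2}$ over $\sigma(\omega_i(t))$-measurable square-integrable $\hat{X}$ is $\hat{X} = \condE{X}{\omega_i(t)}$. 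Square-integrability of $\bar{x}(t)$ follows from $\Ep{x_j^2}=\sigma^2<\infty$ together with the triangle inequality. This gives the first equality in (\ref{eq:prop:best_algo}), and the second is immediate from linearity of conditional expectation applied to $\bar{x}(t) = \tfrac{1}{N}\sum_{j=1}^N x_j(t)$.

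I do not anticipate a substantive obstacle, as the proof reduces to the textbook MMSE characterization. The only point that genuinely deserves to be spelled out is the decoupling across agents: it relies crucially on the fact that no agent can access another's knowledge set except through information propagated by actual interactions, which is exactly what Definition~\ref{def:knowledge_set} encodes. Once that is granted, the per-agent minimization is a standard $L^2$ projection, and the claim follows.
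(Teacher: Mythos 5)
Your proposal is correct and takes essentially the same route as the paper: both reduce to the fact that the conditional expectation given $\omega_i(t)$ is the minimum-mean-square-error estimator among all estimates computable from $\omega_i(t)$, followed by the tower property to pass to the unconditional criterion. The only cosmetic difference is that the paper establishes the conditional minimization by differentiating the quadratic in $y_i(t)$ rather than invoking the $L^2$ projection theorem, and it works directly with a single agent's term (justified by the symmetry already noted in equation (\ref{eq:stt:crit_expected_value})) rather than spelling out the per-agent decoupling.
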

\vspace{0.1cm}
\begin{proof}
    Conditional to the knowledge set $\omega_i(t)$, algorithm (\ref{eq:prop:best_algo}) is optimal since it minimizes the MSE:
    $$\tfrac{d}{dy_i(t)}\Ep{(\bar{x}(t)-y_i(t))^2|\omega_i(t)} = 2y_i(t) - 2\Ep{\bar{x}(t)|\omega_i(t)} = 0.$$
    
    Hence, since any other algorithm only depends on $\omega_i(t)$ and $t$, the result of any of them $y^*_i(t)$ is such that
    $$\Ep{(\bar{x}(t)-y_i^*(t))^2|\omega_i(t)} \geq \Ep{(\bar{x}(t)-y_i(t))^2|\omega_i(t)}.$$
    Considering the expected value on all $\omega_i(t)$, it follows:
    $$\Ep{\Ep{(\bar{x}(t)-y_i^*(t))^2|\omega_i(t)}} \geq \Ep{\Ep{(\bar{x}(t)-y_i(t))^2|\omega_i(t)}}.$$
\end{proof}
\vspace{0.1cm}

Since the algorithm (\ref{eq:prop:best_algo}) is optimal in the sense of criterion (\ref{eq:stt:crit_expected_value}), its performance 
\begin{equation}
    \label{eq:gen:best_algo_MSE}
    \Ep{C(t)} = 
    \Ep{\Ep{\prt{\bar{x}(t)-\Ep{\bar{x}(t)|\omega_i(t)}}^2|\omega_i(t)}}
\end{equation}
provides a lower bound on the performance of all algorithms that can be deployed in our settings.

\subsection{Individual contribution}
\label{sec:cpt:decomposition}

We show in the following proposition that expression (\ref{eq:gen:best_algo_MSE}) conveniently reduces to the analysis of a single agent.

\begin{prop}
    \label{prop:gen:decomposition}
    The criterion (\ref{eq:stt:crit_expected_value})
    for the algorithm (\ref{eq:prop:best_algo}) conditional to the knowledge set $\omega_i(t)$ reduces to
    \begin{equation}
        \label{eq:prop:decomposition}
        \small
        \Ep{\prt{\bar{x}(t)-y_i(t)}^2|\omega_i(t)} = \tfrac{1}{N^2}\sum\nolimits_{j=1}^N \Ep{(x_j(t)-\hat{x}_j(t))^2|\omega_i(t)}
    \end{equation}
    where $\hat{x}_j(t) := \Ep{x_j(t)|\omega_i(t)}$ (we lighten the notation by dropping the dependence of $\hat x_j^{(i)}(t)$ on $i$).
\end{prop}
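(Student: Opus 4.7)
My plan is to expand the square, use linearity of conditional expectation, and kill the cross terms via a conditional independence argument.

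First I would substitute the definitions. By \eqref{eq:prop:best_algo} we have $y_i(t) = \frac{1}{N}\sum_{j=1}^N \hat{x}_j(t)$, and by definition $\bar{x}(t) = \frac{1}{N}\sum_{j=1}^N x_j(t)$, so
\[
\bar{x}(t)-y_i(t) = \tfrac{1}{N}\sum\nolimits_{j=1}^N \bigl(x_j(t)-\hat{x}_j(t)\bigr).
\]
Expanding the square and applying linearity of the conditional expectation yields
\[
\Ep{(\bar{x}(t)-y_i(t))^2\,|\,\omega_i(t)} = \tfrac{1}{N^2}\sum_{j,k}\Ep{(x_j-\hat{x}_j)(x_k-\hat{x}_k)\,|\,\omega_i(t)}.
\]
The diagonal terms $j=k$ produce exactly the right-hand side of \eqref{eq:prop:decomposition}, so it only remains to show that the off-diagonal terms ($j\neq k$) vanish.

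For the cross terms, since $\hat{x}_j(t)$ and $\hat{x}_k(t)$ are deterministic functions of $\omega_i(t)$, distributing gives
\[
\Ep{(x_j-\hat{x}_j)(x_k-\hat{x}_k)\,|\,\omega_i(t)} = \Ep{x_jx_k\,|\,\omega_i(t)} - \hat{x}_j\hat{x}_k,
\]
after using $\Ep{x_j\,|\,\omega_i(t)} = \hat{x}_j$ and the analogous relation for $k$. Hence I need $\Ep{x_jx_k\,|\,\omega_i(t)} = \hat{x}_j\hat{x}_k$, which is the claim that $x_j(t)$ and $x_k(t)$ are conditionally uncorrelated (in fact independent) given $\omega_i(t)$ whenever $j\neq k$.

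I would justify this conditional independence from the system's construction: the replacement Poisson clock of agent $j$, together with the i.i.d.\ sequence of values drawn at its successive replacements, is independent of the corresponding process for agent $k$; and $\omega_i(t)$ is built as a deterministic function of the (shared) communication clocks and of these per-agent processes. Conditioning on $\omega_i(t)$ pins down the communication history and the information that has actually reached $i$ about each agent, but the residual uncertainty about $x_j(t)$ depends only on whether $j$ was replaced after $\tilde t_j^{(i)}(t)$ and on the new value drawn in that case -- two sources that are, by construction, independent of the analogous quantities for $k$. The main obstacle in making this fully rigorous is precisely this measurability/independence argument: one must argue that no ``hidden'' coupling between the current values of $j$ and $k$ is induced by conditioning on $\omega_i(t)$. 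Once conditional independence is granted, $\Ep{x_jx_k\,|\,\omega_i(t)} = \Ep{x_j\,|\,\omega_i(t)}\Ep{x_k\,|\,\omega_i(t)} = \hat{x}_j\hat{x}_k$, the cross terms vanish, and \eqref{eq:prop:decomposition} follows.
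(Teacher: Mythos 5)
Your proof is correct and follows essentially the same route as the paper: rewrite $\bar{x}(t)-y_i(t)$ as $\tfrac{1}{N}\sum_j(x_j(t)-\hat{x}_j(t))$, expand the square, and cancel the cross terms using the absence of correlation between the agents' values. You are in fact more careful than the paper's one-line justification, since you correctly identify that what is needed is uncorrelatedness \emph{conditional on} $\omega_i(t)$ and sketch why the per-agent replacement clocks and i.i.d.\ draws make this hold.
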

\vspace{0.1cm}
\begin{proof}
    With the optimal algorithm (\ref{eq:prop:best_algo}), one has 
    $$y_i(t) = \tfrac{1}{N}\sum\nolimits_{j=1}^N\Ep{x_j(t)|\omega_i(t)}$$ 
    It follows that the error given $\omega_i(t)$ is written 
    {\small$$\Ep{(\bar{x}(t)-y_i(t))^2|\omega_i(t)} = \frac{1}{N^2} \Ep{\prt{\sum\nolimits_{j=1}^N \prt{x_j(t)-\hat{x}_j(t)}}^2\middle|\omega_i(t)}.$$}
    The absence of correlation between the agents values $x_i(t)$ finally allows to nullify the crossed-product terms of the squared sum, to obtain the final expression.
\end{proof}

\subsection{Single agent estimate}
\label{sec:cpt:single_agent_estimate}

We can then explicitly write an expression for the estimate of a single agent $\hat{x}_j(t)$.

\begin{prop}
    \label{prop:single_agent_estimate}
    There holds,
    \begin{equation}
        \label{eq:prop:single_agent_estimate}
        \hat{x}_j(t) = \Ep{x_j(t)|\omega_i(t)} = e^{-\lambda_rT_j^{(i)}}x_j(t-T_j^{(i)}).
    \end{equation}
    Hence, the estimate $\hat{x}_j(t)$ only depends on the most recent information about the agent $j$ (note that the time-dependence of $T_j^{(i)}(t)$ is removed to lighten the notations).
\end{prop}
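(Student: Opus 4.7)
The plan is to compute $\Ep{x_j(t) \mid \omega_i(t)}$ by conditioning on whether or not agent $j$ has been replaced during the interval $\left(t - T_j^{(i)}, t\right]$, since a replacement is the only mechanism through which $x_j(t)$ can differ from the most recent known value $x_j\left(t - T_j^{(i)}\right)$.

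First I would argue that, by the very definition of $T_j^{(i)}(t) = t - \max\{s : x_j(s) \in \omega_i(t)\}$, the set $\omega_i(t)$ contains $x_j\left(t - T_j^{(i)}\right)$ but contains no value $x_j(s)$ for any $s > t - T_j^{(i)}$. Any replacement of $j$ at a time $s \in \left(t - T_j^{(i)}, t\right]$ that were known to $i$ would have to appear in $\omega_i(t)$ together with the post-replacement value (either directly as $j$'s arrival record, or propagated through a later interaction chain), which would contradict the definition of $T_j^{(i)}$. Hence the only information $\omega_i(t)$ carries about that window is the negative statement that no chain of interactions originating at $j$ after $t - T_j^{(i)}$ reached $i$ by time $t$, an event determined solely by the interaction clocks.

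Next I would invoke the mutual independence between the Poisson replacement clock of $j$, the collection of pairwise interaction clocks, and the i.i.d.\ values drawn at each replacement. It follows that, conditionally on $\omega_i(t)$, the probability that $j$ has not been replaced during $\left(t - T_j^{(i)}, t\right]$ is simply $e^{-\lambda_r T_j^{(i)}}$. A two-case decomposition then yields the result: with probability $e^{-\lambda_r T_j^{(i)}}$, no replacement has occurred and $x_j(t) = x_j\left(t - T_j^{(i)}\right)$; with the complementary probability $1 - e^{-\lambda_r T_j^{(i)}}$, the current value is a fresh draw from the common zero-mean distribution made after all events recorded in $\omega_i(t)$, and thus contributes $0$ to the conditional expectation. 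Summing the two contributions gives exactly $e^{-\lambda_r T_j^{(i)}}\, x_j\left(t - T_j^{(i)}\right)$.

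The main obstacle will be to make the independence step rigorous, namely to verify that conditioning on the \emph{full} content of $\omega_i(t)$, rather than merely on the most recent known value, does not perturb the law of the replacement process of $j$ on $\left(t - T_j^{(i)}, t\right]$. This amounts to checking that the remaining records in $\omega_i(t)$ refer either to events strictly earlier than $t - T_j^{(i)}$, or to interaction clocks, all of which are independent of $j$'s replacement clock and of the values assigned at its replacements on the relevant window. Once this independence is granted, the elementary case analysis above delivers the formula stated in the proposition.
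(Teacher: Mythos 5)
Your proof is correct and follows essentially the same route as the paper's: a two-case decomposition on whether $j$ was replaced during the window of length $T_j^{(i)}$, with the Poisson memorylessness giving the no-replacement probability $e^{-\lambda_r T_j^{(i)}}$ and the zero-mean assumption killing the replaced branch. The only difference is that you explicitly flag and justify the conditional-independence step (that $\omega_i(t)$ carries no information about $j$'s replacement clock on that window), which the paper simply asserts with ``we have no information about whether it was replaced since then.''
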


\begin{proof}
    The most recent information we know about $j$ is given by Definition~\ref{def:most_recent_info}, and we have no information about whether it was replaced since then. Hence, denoting $R$ the event that $j$ has been replaced, and $\bar R$ that it has not,
    $$\hat{x}_j(t) = \Ep{x_j(t) | \bar R}\cdot P(\bar R) + \Ep{x_j(t)|R} \cdot \prt{1-P(\bar R)}.$$
    By definition, $\Ep{x_j(t)|R} = 0$. Then, from Poisson properties:
    $$\hat{x}_j(t) = 0 + \tilde x_j^{(i)}(t)\ e^{-\lambda_r\prt{t-\tilde t_j^{(i)}(t)}}$$
    and the conclusion follows from the definition of the age of the most recent information.
\end{proof}

The result above also stands when an agent estimates its own value or that of an agent for which it has no information, leading respectively to $\hat x_i^{(i)}(t) = x_i(t)$ and $\hat x_j^{(i)}(t)=0$.

\subsection{Individual error}
\label{sec:cpt:indiv_error}

For concision matters, we denote
\begin{equation}
    \label{eq:gen:individual_MSE_def}
    C_j^{(i)}(t) := (x_j(t)-\hat{x}_j(t))^2
\end{equation}
We can then write the MSE of the estimation of a single agent by injecting (\ref{eq:prop:single_agent_estimate}) into (\ref{eq:gen:individual_MSE_def}).

\begin{prop}
    \label{prop:individual_MSE}
    The MSE when estimating a single agent for algorithm (\ref{eq:prop:best_algo}) conditional to the knowledge set $\omega_i(t)$ is
    \begin{equation}
        \label{eq:prop:individual_MSE}
        \Ep{\prt{x_j(t)-\hat{x}_j(t)}^2|\omega_i(t)} 
        = \Big(1-e^{-2\lambda_rT_j^{(i)}}\Big)\sigma^2
    \end{equation}
    and is thus entirely characterized by the age of the most recent information about it.
\end{prop}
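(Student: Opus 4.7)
The plan is to substitute the expression for $\hat x_j(t)$ from Proposition~\ref{prop:single_agent_estimate}, expand the square, and evaluate each term using the same Poisson-clock decomposition over the replacement event that yielded that proposition. Since $\hat x_j(t)$ is $\omega_i(t)$-measurable and is by construction $\Ep{x_j(t)|\omega_i(t)}$, the cross term in the expansion collapses to $\hat x_j(t)^2$, giving the identity $\Ep{(x_j(t)-\hat x_j(t))^2|\omega_i(t)} = \Ep{x_j(t)^2|\omega_i(t)} - \hat x_j(t)^2$. So the task reduces to evaluating $\Ep{x_j(t)^2|\omega_i(t)}$.

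To compute this, let $R$ denote the event that $j$ was replaced at least once during $(t-T_j^{(i)},t]$, so that $P(\bar R|\omega_i(t)) = e^{-\lambda_r T_j^{(i)}}$ by the Poisson clock. On $\bar R$ the value has not changed, so $x_j(t) = \tilde x_j^{(i)}(t)$ and contributes $\tilde x_j^{(i)}(t)^2$; on $R$ the value is a fresh i.i.d.\ draw, independent of the history, contributing $\sigma^2$. Combined with $\hat x_j(t)^2 = e^{-2\lambda_r T_j^{(i)}}\tilde x_j^{(i)}(t)^2$ from Proposition~\ref{prop:single_agent_estimate}, this gives
\begin{equation*}
\Ep{(x_j(t)-\hat x_j(t))^2|\omega_i(t)} = (1-e^{-\lambda_r T_j^{(i)}})\sigma^2 + e^{-\lambda_r T_j^{(i)}}(1-e^{-\lambda_r T_j^{(i)}})\,\tilde x_j^{(i)}(t)^2.
\end{equation*}

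Averaging the remaining $\tilde x_j^{(i)}(t)^2$ against its distribution given $T_j^{(i)}$, and using that the intrinsic values are i.i.d.\ with second moment $\sigma^2$ so that $\Ep{\tilde x_j^{(i)}(t)^2|T_j^{(i)}} = \sigma^2$, the two terms combine through the factorization $(1-e^{-\lambda_r T})(1+e^{-\lambda_r T}) = 1-e^{-2\lambda_r T}$ to produce the claimed $(1-e^{-2\lambda_r T_j^{(i)}})\sigma^2$. The only delicate point is that the conditional expectation on the left literally still depends on $\tilde x_j^{(i)}(t)$, while the stated right-hand side depends only on $T_j^{(i)}$; the identity is thus to be read after this further averaging over $\omega_i(t)$ at fixed $T_j^{(i)}$, which is precisely the form needed when this expression is plugged into Proposition~\ref{prop:gen:decomposition} and ultimately Theorem~\ref{thm:general_bnd_pdf}. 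With this understood, the remainder is a routine algebraic simplification; I do not expect any significant obstacle.
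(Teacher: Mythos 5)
Your proof is correct and follows essentially the same route as the paper: a case split on whether $j$ was replaced during $T_j^{(i)}$, with Poisson probabilities $e^{-\lambda_r T_j^{(i)}}$ and $1-e^{-\lambda_r T_j^{(i)}}$ (your preliminary reduction to $\Ep{x_j(t)^2|\omega_i(t)}-\hat{x}_j(t)^2$ is only a cosmetic variant of the paper's direct evaluation of the two conditional errors). The delicate point you flag is real: conditional on the full $\omega_i(t)$ the error still contains $\tilde x_j^{(i)}(t)^2$, and the stated formula only holds after averaging that term to $\sigma^2$ at fixed $T_j^{(i)}$ --- the paper's own proof silently makes the same substitution in its expressions for $\Ep{C_j^{(i)}(t)|R}$ and $\Ep{C_j^{(i)}(t)|\bar R}$, and this extra averaging is harmless for the subsequent use in Proposition~\ref{prop:steady_state}.
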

\vspace{0.1cm}
\begin{proof}
    Denoting $R$ the event of at least one replacement of the agent $j$ during $T_j^{(i)}$ and by $\bar R$ the event of no replacement, we can develop (\ref{eq:gen:individual_MSE_def}) with a case-by-case analysis:
    $$\Ep{C_j^{(i)}(t)|\omega_i(t)} = \Ep{C_j^{(i)}(t)|R}\cdot P(R) + \Ep{C_j^{(i)}(t)|\bar R}\cdot P(\bar R).$$
    We develop each term to obtain the final result:
    \begin{align*}
        &P(R) = 1-e^{-\lambda_rT_j^{(i)}} &&\Ep{C_j^{(i)}(t)|R} = \Big(1+e^{-2\lambda_r T_j^{(i)}}\Big)\sigma^2\\
        &P(\bar R) = e^{-\lambda_rT_j^{(i)}} &&\Ep{C_j^{(i)}(t)|\bar R} = \Big(1-e^{-\lambda_rT_j^{(i)}}\Big)^2\sigma^2
    \end{align*}
\end{proof}

This last result allows to write the following:
\begin{equation}
    \label{eq:gen:MSE_given_omega}
    \Ep{C(t)|\omega_i(t)} = \tfrac{1}{N^2}\sum\nolimits_{j=1}^N \Big(1-e^{-2\lambda_rT_j^{(i)}}\Big)\sigma^2.
\end{equation}

\subsection{Global expected value}
\label{sec:cpt:global_error}

We have developed an expression for the error in terms of the age of the most recent information about the other agents in (\ref{eq:gen:MSE_given_omega}). We can now obtain an expression for criterion (\ref{eq:stt:crit_expected_value}) by computing the expected value of that result.

\begin{prop}
    \label{prop:steady_state}
    The MSE defined in (\ref{eq:stt:crit_expected_value}) is given by
    \begin{equation}
        \label{eq:prop:steady_state}
        \Ep{\bar{x}(t)-y_i(t)}^2 = \frac{N-1}{N^2}\int_0^\infty f(t) \prt{1-e^{-2\lambda_rt}}\sigma^2\ \mathrm{d}t,
    \end{equation} 
    where we remind that $\E x_i = 0$ and $\E x_i^2 = \sigma^2$.
    It is thus entirely characterized by the distribution of the age of an information, and knowing $f(t)$ leads to a proper bound.
\end{prop}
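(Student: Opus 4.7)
The plan is to take the expectation of expression (\ref{eq:gen:MSE_given_omega}), which already conditions on the knowledge set, by using the tower rule together with the symmetry properties of the system in steady state.

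First, I would start from Proposition \ref{prop:individual_MSE} rewritten as equation (\ref{eq:gen:MSE_given_omega}):
\begin{equation*}
\Ep{C(t)\mid \omega_i(t)} = \tfrac{1}{N^2}\sum\nolimits_{j=1}^N \left(1-e^{-2\lambda_r T_j^{(i)}}\right)\sigma^2 .
\end{equation*}
Applying the law of total expectation and using linearity, this yields
\begin{equation*}
\Ep{C(t)} = \tfrac{1}{N^2}\sum\nolimits_{j=1}^N \Ep{1-e^{-2\lambda_r T_j^{(i)}}}\sigma^2.
\end{equation*}

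Next, I would isolate the self-term: by the convention introduced after Definition \ref{def:most_recent_info}, $T_i^{(i)}=0$, so the term $j=i$ contributes zero. For the remaining $N-1$ terms, I would invoke the steady-state symmetry argument stated at the end of Section \ref{sec:stt:defs}: since the system has been running since $-\infty$, all pairs $(i,j)$ with $j\neq i$ are exchangeable, so $T_j^{(i)}$ shares the common cdf $F$ and pdf $f$. Therefore each of these $N-1$ expectations equals
\begin{equation*}
\Ep{1-e^{-2\lambda_r T}} = \int_0^\infty f(t)\left(1-e^{-2\lambda_r t}\right) \mathrm{d}t,
\end{equation*}
where the integral is well-defined (even if $f$ has an atom at $+\infty$, the integrand vanishes there in the contribution $1-e^{-2\lambda_r t}$ times the density weight—wait, actually when $T=+\infty$ we have $1-e^{-2\lambda_r T}=1$, so we should interpret this as a Stieltjes integral against the distribution; in any case all the mass at infinity simply contributes $\sigma^2$ per unit of such mass, which is consistent with the integral formula as $t\to\infty$).

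Summing the $N-1$ identical contributions and multiplying by $\sigma^2/N^2$ immediately yields formula (\ref{eq:prop:steady_state}). The only subtle step is the symmetry/steady-state invocation that makes all the $T_j^{(i)}$ (for $j\neq i$) share the same distribution $f$; this is where one has to be careful, because replacements and communications form a stationary marked point process, so the distribution of the age variable is indeed time- and label-independent. Once that is granted, the rest is arithmetic, and the statement follows.
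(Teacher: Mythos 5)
Your proposal is correct and follows essentially the same route as the paper's proof: tower rule on (\ref{eq:gen:MSE_given_omega}), the convention $T_i^{(i)}=0$ killing the self-term, and the steady-state symmetry giving $N-1$ identical integrals against $f$. The extra care you take about mass at $+\infty$ (interpreting the integral in the Stieltjes sense) is a reasonable refinement the paper glosses over, but it does not change the argument.
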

\vspace{0.1cm}
\begin{proof}
    By definition, the global MSE is 
    $$\Ep{(\bar{x}(t)-y_i(t))^2} = \Ep{\Ep{(\bar{x}(t)-y_i(t))^2|\omega_i(t)}}.$$
    Using the result (\ref{eq:gen:MSE_given_omega}), it becomes
    $$\Ep{(\bar{x}(t)-y_i(t))^2} = \tfrac{1}{N^2}\sum\nolimits_{j=1}^N \Ep{\prt{1-e^{-2\lambda_rT_j^{(i)}}}}\sigma^2.$$
    Finally, since all $T_j^{(i)}$ follow $f(t)$ from Section \ref{sec:stt:defs},
    $$\Ep{\prt{1-e^{-2\lambda_rT_j^{(i)}}}} = \int_0^\infty f(t) \prt{1-e^{-2\lambda_r t}} \mathrm{d}t,$$
    and the conclusion is then direct, where we remind that $T_i^{(i)} = 0$ by definition (\textit{cfr} Section~\ref{sec:stt:defs}).
\end{proof}

\subsection{Relaxation of the communication process}
\label{sec:cpt:pdf}

The previous result is actually that of Theorem~\ref{thm:general_bnd_pdf} where the \textit{pdf} is exactly $f(s)$. Since this \textit{pdf} reveals to be hard to compute, we show in the next proposition that a \textit{pdf} $f^*(s)$ bounding $f(s)$ as defined in Section~\ref{sec:stt:defs} leads to a proper lower bound on (\ref{eq:prop:steady_state}), which concludes the development of Theorem~\ref{thm:general_bnd_pdf}. In the next two Sections, we will then present two possible relaxations of the information spreading mechanism that will both lead to an appropriate $f^*(s)$ to instantiate the bound.

\begin{prop}
    \label{prop:CDF}
    Given some value $E = \int_0^\infty F'(t)\cdot err(t) \mathrm{d}t$ where $F(t)$ is a \textit{cdf}, and where $err(t)$ is a positive non-decreasing function, then for any other \textit{cdf} $F^*(t) \geq F(t) \ \forall t$,
    \begin{equation}
        \label{eq:General:Integral_higher_CDF_prop}
        E^* = \int_0^\infty (F^*)'(t)\cdot err(t) \mathrm{d}t \leq E
    \end{equation}
\end{prop}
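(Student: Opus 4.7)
The plan is to reduce the inequality to the pointwise comparison $F^*(s) \geq F(s)$ by rewriting both integrals so that the cdfs appear only through their survival functions. I would rely on the standard identity
\begin{equation*}
    \int_0^\infty err(t)\, F'(t)\, \mathrm{d}t \;=\; err(0) + \int_0^\infty \bigl(1 - F(s)\bigr)\, \mathrm{d}\,err(s),
\end{equation*}
which I obtain either by writing $err(t) = err(0) + \int_{(0,t]} \mathrm{d}\,err(s)$ and swapping the order of integration via Fubini--Tonelli (all quantities are non-negative, so this is unconditional), or equivalently by integrating by parts using $F(t) - 1$ as antiderivative of $F'(t)$, which makes the boundary term at infinity vanish as long as $E$ is finite. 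The exact same identity, applied to $F^*$, yields the analogous expression for $E^*$.

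Subtracting the two expressions cancels $err(0)$ and produces
\begin{equation*}
    E - E^* \;=\; \int_0^\infty \bigl(F^*(s) - F(s)\bigr)\, \mathrm{d}\,err(s).
\end{equation*}
Since $err$ is non-decreasing, $\mathrm{d}\,err$ is a non-negative Borel measure on $[0,\infty)$; and by assumption $F^*(s) - F(s) \geq 0$ for every $s$. The integrand is therefore non-negative, immediately giving $E - E^* \geq 0$, which is the claim.

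The only subtlety I anticipate concerns the regularity of $err$: since it is merely assumed non-decreasing and not necessarily differentiable, the cleanest writing uses the Lebesgue--Stieltjes measure $\mathrm{d}\,err$ rather than $err'(s)\,\mathrm{d}s$. This is only a formal obstacle; in the paper's intended application $err(t) = (1 - e^{-2\lambda_r t})\sigma^2$ is smooth and $\mathrm{d}\,err(s)$ reduces to $2\lambda_r \sigma^2 e^{-2\lambda_r s}\,\mathrm{d}s$, so the argument becomes fully elementary. As a side remark, $F^* \geq F$ is exactly the usual stochastic order $T^* \leq_{st} T$ already mentioned in Section~\ref{sec:stt:defs}, under which $\E[err(T^*)] \leq \E[err(T)]$ for any non-decreasing $err$ is a classical result, providing a one-line alternative justification.
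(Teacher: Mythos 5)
Your proof is correct and follows essentially the same route as the paper's: both shift the derivative from the cdf onto $err$ (the paper via the substitution $F^*=F+\Delta$ and integration by parts, you via the equivalent survival-function/Fubini identity) and then conclude from $F^*-F\geq 0$ and the monotonicity of $err$. Your version is merely more explicit about the boundary terms and the Lebesgue--Stieltjes formulation, which the paper glosses over.
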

\begin{proof}
    By defining $F^* = F + \Delta$ with $\Delta \geq 0$, one has $$E^* = E - \int_0^\infty \Delta\cdot err'(t)\mathrm{d}t$$ thanks to \textit{cdf} properties. Using then the positivity of $\Delta$ and $err'(t)$, the conclusion is direct.
\end{proof}

\section{Strong assumption: Ping model}
\label{Sec:Ping}

\subsection{Assumption description}

The first relaxation we consider relies on a strong simplification of the communication mechanism. 
We assume that each time it communicates, an agent \textit{acquires all the information about all the agents presently in the system,} and never forgets it, even at replacements. We refer to this assumption as the \textit{Ping model}, in reference to the \textit{ping} software used to test the reachability of machines in a network. The most recent value known by the agent $i$ about $j$ at time $t$ is then given by $\tilde x_j^{(i)}(t) = x_j\big(\tilde t_j^{(i)}(t)\big)$ where $\tilde t_j^{(i)}(t)$ is the time of the very last communication in which agent $i$ was involved, and the age of that information $T_{j,Ping}^{(i)}(t)~=~t-~\tilde t_j^{(i)}(t)$ is the time spent since then.

\begin{prop}
    \label{prop:Ping:assumption}
    In steady state, $\forall i\neq j$, the age of the most recent information about $j$ held by $i$, noted $T_{j,Ping}^{(i)}(t)$ follows
        \begin{equation}
        f^{Ping}(s) = (N-1)\lambda_ce^{-(N-1)\lambda_cs},
    \end{equation}
    which bounds $f(s)$ as defined in Section~\ref{sec:stt:defs}.
\end{prop}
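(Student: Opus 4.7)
The statement has two components: determining the explicit \textit{pdf} of the Ping age, and verifying that the associated \textit{cdf} dominates $F(s)$ in the sense of Section~\ref{sec:stt:defs}. The two parts rely on rather independent arguments, so I would treat them in succession.

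For the \textit{pdf}, the key observation is that in the Ping model, $T_{j,Ping}^{(i)}(t)$ depends only on the time of the most recent communication involving slot $i$, and not on the identity of $j$ nor on replacements. Since $i$ communicates with each of the other $N-1$ agents according to independent Poisson clocks of rate $\lambda_c$, the superposition theorem implies that the times at which $i$ is involved in a communication form a Poisson process of rate $(N-1)\lambda_c$. Because the system has been running since $-\infty$, the backward recurrence time of this stationary Poisson process at any fixed $t$ is exponentially distributed with parameter $(N-1)\lambda_c$, yielding $f^{Ping}(s) = (N-1)\lambda_c\, e^{-(N-1)\lambda_c s}$.

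For the bound, I would construct a pathwise coupling of the two models on a common probability space (same Poisson clocks for communications and replacements) and show that $T_{j,Ping}^{(i)}(t) \leq T_j^{(i)}(t)$ in every realisation. Let $s_\ast$ be the time of the most recent communication involving slot $i$ before $t$, so that $T_{j,Ping}^{(i)}(t) = t - s_\ast$. In the real model, any value of $x_j$ stored in $\omega_i(t)$ must have travelled through a chain of communications whose last step is a communication involving $i$ at some time $s' \leq s_\ast$, the value carried being from a time $\leq s'$; hence $\tilde t_j^{(i)}(t) \leq s' \leq s_\ast$, giving $T_j^{(i)}(t) \geq t - s_\ast = T_{j,Ping}^{(i)}(t)$. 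If $\omega_i(t)$ contains no information about $j$, the inequality is trivial by the convention $T_j^{(i)}(t) = +\infty$. This pathwise inequality implies $T_{j,Ping}^{(i)} \leq_{st} T_j^{(i)}$, hence $F^{Ping}(s) \geq F(s)$ for every $s$, which is the bounding property.

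The step I expect to require the most care is handling replacements of $i$ itself within the coupling: in the real model, a replacement erases $i$'s memory, whereas the Ping model preserves it through replacements. This asymmetry does not break the inequality, since after a replacement of $i$ the real model either contains no information on $j$ (so $T_j^{(i)} = +\infty$) or has acquired new information only through subsequent communications, all of which happen at times $\leq s_\ast$. The retention of information in the Ping model therefore only strengthens the domination, and no further argument is needed.
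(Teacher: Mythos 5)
Your proposal is correct and follows essentially the same route as the paper: the superposition of the $N-1$ pairwise Poisson clocks gives a rate-$(N-1)\lambda_c$ process whose backward recurrence time is exponential, and the stochastic domination is obtained exactly as the paper's one-line observation that $T_{j,Ping}^{(i)}(t)$ is pathwise the minimal possible value of $T_j^{(i)}(t)$, which you merely spell out in more detail (including the harmless asymmetry at replacements of $i$).
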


\begin{proof}
    The random variable $T_{j,Ping}^{(i)}(t)$ defines the time spent since the last communication involving the agent $i$, and thus follows by definition of the communications a Poisson process of rate $(N-1)\lambda_c$.
    Its \textit{cdf} is then given by
    $$F^{Ping}(s) = P\Big[T_{j,Ping}^{(i)}(t)\leq s\Big] = 1-e^{-(N-1)\lambda_cs}.$$
    The \textit{pdf} is then obtained with $f^{Ping}(s) = \tfrac{dF^{Ping}(s)}{ds}.$
    
    $T_{j,Ping}^{(i)}(t)$ is actually the minimal value that can take $T_j^{(i)}(t)$. Hence $T_{j,Ping}^{(i)}(t) \leq_{st} T_j^{(i)}(t)$, and $f^{Ping}(s)$ bounds $f(s)$.
\end{proof}

\subsection{Results}

\begin{thm}
    \label{thm:Ping:Bound}
    The performances of any algorithm in the setting defined in Section~\ref{Sec:Statement} are bounded from below as
    \begin{equation}
        \label{eq:Ping:bound}
        \Ep{C(t)} \geq \frac{N-1}{N^2}\prt{\frac{1}{1+\frac{N-1}{2}\frac{\lambda_c}{\lambda_r}}}\sigma^2
    \end{equation}
\end{thm}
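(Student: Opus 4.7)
The plan is to instantiate Theorem~\ref{thm:general_bnd_pdf} with the bounding \textit{pdf} $f^*(s) = f^{Ping}(s)$ provided by Proposition~\ref{prop:Ping:assumption}, and then evaluate the resulting integral in closed form. Since Proposition~\ref{prop:Ping:assumption} already establishes that $f^{Ping}$ bounds $f$ in the required sense (via the stochastic domination $T_{j,Ping}^{(i)} \leq_{st} T_j^{(i)}$), the only remaining work is a routine calculation, so I do not expect any real obstacle.

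First, I would substitute $f^*(t) = (N-1)\lambda_c e^{-(N-1)\lambda_c t}$ into the right-hand side of (\ref{eq:thm:general_bnd_pdf}), yielding
\begin{equation*}
    \Ep{C(t)} \geq \frac{N-1}{N^2}\sigma^2 \int_0^\infty (N-1)\lambda_c e^{-(N-1)\lambda_c t}\prt{1-e^{-2\lambda_r t}} \mathrm{d}t.
\end{equation*}
Writing $\alpha := (N-1)\lambda_c$ and $\beta := 2\lambda_r$ for brevity, the integral splits as $\int_0^\infty \alpha e^{-\alpha t}\mathrm{d}t - \int_0^\infty \alpha e^{-(\alpha+\beta)t}\mathrm{d}t$, which evaluates to $1 - \frac{\alpha}{\alpha+\beta} = \frac{\beta}{\alpha+\beta}$.

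Reinjecting the definitions of $\alpha$ and $\beta$ gives $\frac{2\lambda_r}{(N-1)\lambda_c + 2\lambda_r}$, which after dividing numerator and denominator by $2\lambda_r$ equals $\bigl(1+\tfrac{N-1}{2}\tfrac{\lambda_c}{\lambda_r}\bigr)^{-1}$. Multiplying by the prefactor $\frac{N-1}{N^2}\sigma^2$ yields exactly (\ref{eq:Ping:bound}). The whole argument is essentially an application of Theorem~\ref{thm:general_bnd_pdf} to the Ping relaxation: the conceptual work was carried out in Section~\ref{Sec:Methodology} and in Proposition~\ref{prop:Ping:assumption}, and the present theorem is just the explicit evaluation of the resulting one-dimensional integral.
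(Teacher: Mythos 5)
Your proposal is correct and follows exactly the paper's own route: instantiate Theorem~\ref{thm:general_bnd_pdf} with $f^{Ping}$ from Proposition~\ref{prop:Ping:assumption} and evaluate the resulting exponential integral, which indeed yields $\frac{2\lambda_r}{(N-1)\lambda_c+2\lambda_r}$. The only difference is that you spell out the ``few algebraic operations'' the paper leaves implicit, and your computation is accurate.
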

\vspace{0.1cm}
\begin{proof}
    Applying Theorem~\ref{thm:general_bnd_pdf} with the \textit{pdf} $f^{Ping}(t)$, the bound is given by performing the integration
    $$\Ep{C(t)} \geq \tfrac{N-1}{N^2}\int_0^\infty (N-1)\lambda_ce^{-(N-1)\lambda_cs}\prt{1-e^{-2\lambda_rs}}\sigma^2 \mathrm{d}s.$$
    A few algebraic operations lead to the conclusion.
\end{proof}
\vspace{0.1cm}
\begin{figure}[thpb]
    \centering
    \includegraphics[width=0.45\textwidth]{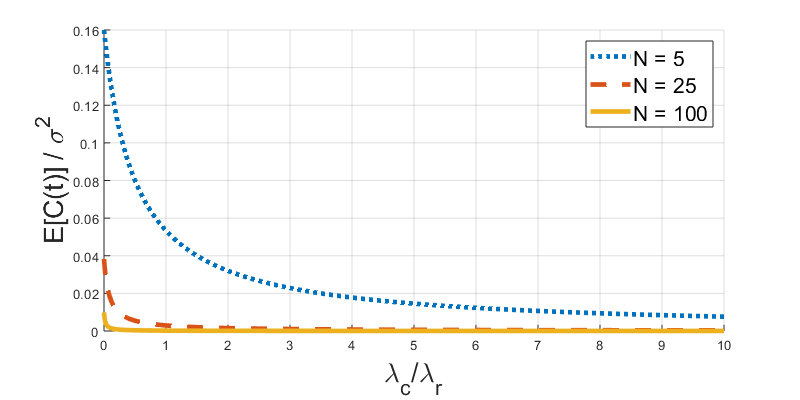}
    \caption{Bound derived with the Ping model (\ref{eq:Ping:bound}) in terms of the rate ratio between pairwise communications and replacements. Each curve stands for a different number of agents $N$, with a unit variance $\sigma^2=1$.}
    \label{fig:Ping_sevN_std}
\end{figure}

Several observations on the evolution of the error in terms of the ratio of the rates $\lambda_c/\lambda_r$ and $N$ arise from Fig.~\ref{fig:Ping_sevN_std}. When communications are rather rare ($\lambda_c/\lambda_r \rightarrow 0$), the error tends to $\tfrac{N-1}{N^2}\sigma^2$, \textit{i.e.} the error that an agent would obtain only considering itself in the estimation. As the communications become more frequent, the second factor of (\ref{eq:Ping:bound}) makes the bound decrease, to ultimately bring it to $0$ as $\lambda_c/\lambda_r\rightarrow+\infty$, which is the error to which we converge in a close system. 

Observe that $\bar{\lambda}_c = (N-1)\lambda_c$ is the communication rate for one agent. Hence, the bound is actually separated into two factors: $\tfrac{N-1}{N^2}\sigma^2$ and the second one that only depends on $\bar{\lambda}_c/\lambda_r$ which characterizes the expected number of interactions involving one agent before it is replaced.

\section{Weaker assumption: Infection model}
\label{Sec:Infection}

\subsection{Assumption description}

The random variable $T_j^{(i)}(t)$ denotes the age of the most recent information about $j$ available to $i$ at time $t$, \textit{i.e.} the time since this information was emitted by $j$.  
Since the distribution of $T_j^{(i)}(t)$ is independent of the time $t$, its \textit{cdf} $F(s)$ corresponds to the probability that an information $x_j(\tau)$ (with $0 \leq \tau \leq s$) is available to $i$ at time $s$. 
This probability is lower than what it would be if the agents were never replaced (as information is erased with replacements). Hence we use this latter replacement-free situation as a relaxation to obtain a bound on the \textit{cdf}.
One can verify that when agents are never replaced, the information propagation follows a simple infection process: $j$ is infected at time $0$, and agents get infected as soon as they communicate with an infected agent.
The probability of $i$ having that information about $j$ at time $s$ is then equal to the probability of $i$ being infected at time $s$,
denoted by $P[T_{j,Inf}^{(i)}(t)\leq s]$ where $T_{j,Inf}^{(i)}$ is the time of its first infection.
Based on this, one can prove that the \textit{cdf} of $T_{j,Inf}^{(i)}$ bounds $F(s)$ as defined in Section~\ref{sec:stt:defs}.

\begin{prop}
    \label{prop:inf_valid}
    In steady state, the pdf of $T_{j,Inf}^{(i)}(t)$, noted $f^{Inf}(s)$, bounds $f(s)$ as defined in Section~\ref{sec:stt:defs}.
\end{prop}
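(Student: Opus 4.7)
My plan is to argue by coupling the real model of Section~\ref{Sec:Statement} with the infection model: use a single realization of the pairwise communication Poisson processes for both, while letting the infection model simply ignore all replacement events. Fix a current time $t$ and a horizon $s>0$, and in the coupled infection model declare $j$ to be infected at time $t-s$. The target is the pathwise inclusion
\[
\{T_j^{(i)}(t)\le s\}\;\subseteq\;\{T_{j,Inf}^{(i)}\le s\}.
\]
Taking probabilities then gives $F(s)\le F^{Inf}(s)$ for every $s$, equivalently $T_{j,Inf}^{(i)}\le_{st} T_j^{(i)}$, which is exactly what is meant by $f^{Inf}$ bounding $f$ in Section~\ref{sec:stt:defs}.

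To establish the inclusion, suppose $T_j^{(i)}(t)\le s$. By Definition~\ref{def:most_recent_info} there exists $\tau\in[t-s,t]$ with $x_j(\tau)\in\omega_i(t)$. Unrolling the recursion in Definition~\ref{def:knowledge_set}, this membership is witnessed by a sequence of agents $j=k_0,k_1,\ldots,k_m=i$ and communication times $\tau\le t_1<\cdots<t_m\le t$ such that $k_{\ell-1}$ and $k_\ell$ interact at $t_\ell$, together with no-replacement conditions on each intermediate $k_\ell$ over the subinterval during which it must retain the forwarded information. In the coupled infection model $j$ is infected throughout $[t-s,t]$, and in particular at time $t_1\ge\tau\ge t-s$, so $k_1$ becomes infected at $t_1$; since replacements are ignored, $k_1$ remains infected and infects $k_2$ at $t_2$, and iterating, $k_m=i$ is infected by time $t_m\le t$, giving $T_{j,Inf}^{(i)}\le s$.

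The argument is structural: suppressing replacements can only help information spread, because every successful chain in the real model satisfies strictly more than what the infection model requires of the same communication events. The step I expect to be most delicate is the extraction of the explicit chain from $\omega_i(t)$ and the careful bookkeeping of the no-replacement conditions it must satisfy; the coupling itself is immediate since the infection dynamics depend only on the shared communication process, and no distributional computation is needed at this stage — the explicit form of $f^{Inf}$ is the subject of the subsequent developments in the paper.
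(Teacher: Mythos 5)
Your proposal is correct and follows essentially the same route as the paper: the paper's (very terse) proof is exactly this coupling argument, observing that any chain of communications that delivers $x_j(\tau)$ to $i$ in the real model also infects $i$ in the replacement-free infection model, hence $T_{j,Inf}^{(i)} \leq_{st} T_j^{(i)}$ and $F^{Inf}(s)\geq F(s)$. Your version merely makes explicit the pathwise inclusion and the extraction of the witnessing chain from Definitions~\ref{def:knowledge_set} and~\ref{def:most_recent_info}, which the paper leaves implicit.
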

\begin{proof}
    Considering one realization of communications, there exists a sequence of events where $T_j^{(i)}(t)$ is such that it did not suffer replacements. Since that sequence always also exists with the Infection model, $T_{j,Inf}^{(i)}(t) \leq_{st} T_j^{(i)}(t)$.
\end{proof}

\subsection{Result}

\begin{prop}
    \label{prop:Infection_assumption}
    In steady state, $\forall i\neq j$, the age of the most recent information agent $i$ has about agent $j$ with the Infection model, denoted $T_{j,Inf}^{(i)}(t)$, follows the \textit{cdf}
    \begin{equation}
        \label{eq:Infection:CDF}
        F^{Inf}(s) = \sum\nolimits_{k=1}^N \tfrac{k-1}{N-1}P_k(s)
    \end{equation}
    where $P_k(s)$ is defined by an ODE system:
    \begin{equation}
        \label{eq:Infection:Pk}
        \dot{P}_k(s) = (k-1)(N-k+1)\lambda_cP_{k-1}(s) - k(N-k)\lambda_cP_k(s)
    \end{equation}
    such that $P_1(0)=1$ and $P_{k\neq1}(0)=0$.
\end{prop}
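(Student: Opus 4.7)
The plan is to model the infection spread as a continuous-time Markov chain on the number of infected agents, derive the ODE for $P_k(s)$ from the transition rates, and then use a symmetry argument to express the probability that the specific agent $i$ is infected in terms of the $P_k(s)$.

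First I would introduce the state variable $K(s) \in \{1,\dots,N\}$, equal to the number of infected agents at time $s$, with $K(0)=1$ (only $j$ is infected). Since pairwise communications are Poisson with rate $\lambda_c$ and independent across pairs, when $K(s)=k$ the total rate of infection events is $k(N-k)\lambda_c$ (one endpoint infected, one not), and only the transition $k \to k+1$ is possible. Writing Kolmogorov's forward equations for this birth chain directly yields
\begin{equation*}
    \dot P_k(s) = (k-1)(N-k+1)\lambda_c P_{k-1}(s) - k(N-k)\lambda_c P_k(s),
\end{equation*}
with the initial condition $P_1(0)=1$ and $P_k(0)=0$ for $k\neq 1$, which is exactly (\ref{eq:Infection:Pk}).

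Next I would establish the key symmetry claim: conditional on $K(s)=k$, the set of infected agents other than $j$ is uniformly distributed over the $\binom{N-1}{k-1}$ subsets of size $k-1$ of $\{1,\dots,N\}\setminus\{j\}$. This follows because the communication process is exchangeable under any permutation of the labels $\{1,\dots,N\}\setminus\{j\}$ (all pairs share the same rate $\lambda_c$), and both the dynamics and the initial condition (only $j$ infected) are invariant under such permutations. Therefore the conditional law of the infected set, restricted to the non-$j$ agents, is invariant under these permutations, hence uniform. Consequently, for any fixed $i\neq j$,
\begin{equation*}
    P\!\left[i \text{ is infected at time } s \,\middle|\, K(s)=k\right] = \frac{k-1}{N-1}.
\end{equation*}

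Finally, by the law of total probability,
\begin{equation*}
    F^{Inf}(s) = P\!\left[T_{j,Inf}^{(i)}(t)\leq s\right] = \sum_{k=1}^N \frac{k-1}{N-1}\, P_k(s),
\end{equation*}
which is (\ref{eq:Infection:CDF}). The main obstacle I anticipate is making the exchangeability argument airtight: one must verify that the Poisson communication process, together with the one-sided "infected begets infected" update rule, truly preserves the symmetry among the $N-1$ agents distinct from $j$ at all times, so that the conditional uniformity over subsets of given size holds. The rest is a routine birth-chain computation.
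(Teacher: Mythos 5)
Your proposal is correct and follows essentially the same route as the paper: a birth-type continuous-time Markov chain on the number of infected agents (yielding the forward equations (\ref{eq:Infection:Pk})) combined with the exchangeability of the $N-1$ agents other than $j$ to get $P[i\ \mathrm{infected}\mid K(s)=k]=\tfrac{k-1}{N-1}$ and then the law of total probability. Your version merely spells out the transition-rate and symmetry arguments that the paper cites as standard.
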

\begin{proof}
    Let $I(s)$ be the set containing the labels of the infected agents at time $s$, $N_{I(s)}$ its size, and denote $P_k(s)~=~P[N_{I(s)}~=~k]$. Then, one has
    $$F^{Inf}(s) = P[i\in I(s)] = \sum\nolimits_{k=1}^N P_k(s)\cdot P[i\in I(s)|N_{I(s)}=k].$$
    
    From the symmetry between the agents, and since they are all interchangeable, one has 
    $$P[i\in I(s) | N_{I(s)}=k] = \tfrac{k-1}{N-1}.$$
    
    Finally, the ODE system defining $P_k(s)$ is a standard result from continuous time Markov chains theory.
\end{proof}
\vspace{0.1cm}

\begin{thm}
    \label{thm:Infection:Bound}
    The performances of any algorithm in the setting defined in Section~\ref{Sec:Statement} are bounded from below as
    \begin{equation}
        \label{eq:Infection:Bound}
        \Ep{C(t)} \geq \tfrac{N-1}{N^2}\prt{1-w^TA(2\lambda_rI-A)^{-1}e_1}\sigma^2
    \end{equation}
    where $A$ is the bidiagonal matrix defining the ODE system (\ref{eq:Infection:Pk}) from Proposition~\ref{prop:Infection_assumption} i.e., $A_{ii}~=~-i(N-i)\lambda_c$ and $A_{i+1,i}~=~i(N-i)\lambda_c$, and $w^T~=~\begin{bmatrix} 0 & \tfrac{1}{N-1} & \tfrac{2}{N-1} & \ldots & 1 \end{bmatrix}$, with $\lambda_r\neq0$.
\end{thm}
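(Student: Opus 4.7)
The plan is to apply Theorem~\ref{thm:general_bnd_pdf} with $f^{*}(t)=f^{Inf}(t)$, which is legitimate by Proposition~\ref{prop:inf_valid}, and then to evaluate the resulting integral in closed form by recasting the linear ODE~(\ref{eq:Infection:Pk}) in matrix form. First, I collect the $P_k(s)$ from Proposition~\ref{prop:Infection_assumption} into the vector $P(s)=(P_1(s),\ldots,P_N(s))^T$, so that (\ref{eq:Infection:Pk}) reads $\dot P(s)=A\,P(s)$ with $P(0)=e_1$, hence $P(s)=e^{As}e_1$. Similarly, (\ref{eq:Infection:CDF}) becomes $F^{Inf}(s)=w^T P(s)=w^T e^{As}e_1$, and differentiating gives $f^{Inf}(s)=w^T A\,e^{As}e_1$.

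Next, I split the integral from (\ref{eq:thm:general_bnd_pdf}):
\begin{equation*}
\int_0^\infty f^{Inf}(s)\bigl(1-e^{-2\lambda_r s}\bigr)\mathrm{d}s
= \int_0^\infty f^{Inf}(s)\,\mathrm{d}s - \int_0^\infty f^{Inf}(s)\,e^{-2\lambda_r s}\,\mathrm{d}s.
\end{equation*}
For the first term, I argue that $F^{Inf}(\infty)=1$: since the communication graph is complete and $\lambda_c>0$, the infection reaches every agent with probability one, so $P_N(\infty)=1$, which together with $w_N=1$ yields $F^{Inf}(\infty)=1$. For the second term, I substitute the matrix expression of $f^{Inf}$ and pull the constant vectors out of the integral:
\begin{equation*}
\int_0^\infty f^{Inf}(s)\,e^{-2\lambda_r s}\,\mathrm{d}s
= w^T A\Bigl(\int_0^\infty e^{(A-2\lambda_r I)s}\,\mathrm{d}s\Bigr)e_1.
\end{equation*}

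The key computation is the matrix integral. The diagonal entries of $A$ are $-i(N-i)\lambda_c\leq 0$, and since $A$ is triangular these are exactly its eigenvalues; hence, for $\lambda_r>0$, all eigenvalues of $A-2\lambda_r I$ have strictly negative real part and $2\lambda_r I-A$ is invertible. Consequently $e^{(A-2\lambda_r I)s}\to 0$ as $s\to\infty$, and by the standard identity for matrix exponentials,
\begin{equation*}
\int_0^\infty e^{(A-2\lambda_r I)s}\,\mathrm{d}s = -(A-2\lambda_r I)^{-1} = (2\lambda_r I-A)^{-1}.
\end{equation*}
Substituting this back yields $\int_0^\infty f^{Inf}(s)e^{-2\lambda_r s}\mathrm{d}s = w^T A(2\lambda_r I-A)^{-1}e_1$, and combining with the first term and the prefactor $\tfrac{N-1}{N^2}\sigma^2$ from Theorem~\ref{thm:general_bnd_pdf} gives exactly (\ref{eq:Infection:Bound}).

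The only non-routine point is the justification of the matrix integral and the invertibility of $2\lambda_r I-A$, which follows cleanly from the sign of the diagonal entries of the bidiagonal matrix $A$; everything else reduces to rewriting Proposition~\ref{prop:Infection_assumption} in vector form and a direct application of the general bound already established in Theorem~\ref{thm:general_bnd_pdf}.
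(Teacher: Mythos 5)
Your proposal is correct and follows essentially the same route as the paper: recast (\ref{eq:Infection:Pk}) as $\dot P = AP$, write $f^{Inf}(s)=w^TAe^{As}e_1$, and evaluate $\int_0^\infty e^{(A-2\lambda_r I)s}\,\mathrm{d}s=(2\lambda_r I-A)^{-1}$ using the invertibility guaranteed by $\lambda_r\neq 0$. The only (welcome) additions are your explicit justifications that $F^{Inf}(\infty)=1$ and that the eigenvalues of the triangular matrix $A$ are its diagonal entries $-i(N-i)\lambda_c\leq 0$, points the paper leaves implicit.
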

\begin{proof}
    Equation (\ref{eq:Infection:Pk}) defines a linear system $\dot P(s)~=~AP(s)$ with $P(s) = \begin{bmatrix} P_1(s) & P_2(s) & \ldots & P_N(s) \end{bmatrix}^T,$
    whose solution is given by
    $$P(s) = e^{As}P(0) \ \hbox{ with } \ P(0)=e_1.$$
    
    It follows from (\ref{eq:Infection:CDF}) that the \textit{cdf} and \textit{pdf} are given by
    $$F^{Inf}(s) = w^Te^{As}e_1 \ \hbox{ and } \ f^{Inf}(s) = w^TAe^{As}e_1.$$
    
    Injecting that result in the integral from Theorem~\ref{thm:general_bnd_pdf}, which we will see is well defined, one has
    $$\int_0^\infty f^{Inf}(s)\prt{1-e^{-2\lambda_rs}}\mathrm{d}s = 1-w^TA\int_0^\infty \prt{e^{As}e^{-2\lambda_rs}}\mathrm{d}se_1.$$
    
    Since $A$ and $-2\lambda_rI$ commute, the integral reduces to
    $$\int_0^\infty \prt{e^{As}e^{-2\lambda_rIs}}\mathrm{d}s = \int_0^\infty e^{(A-2\lambda_rI)s}\mathrm{d}s.$$
    
    Since $\lambda_r\neq0$, then $(A-2\lambda_rI)$ is invertible and has strictly negative eigenvalues, leading to
    $$\int_0^\infty e^{(A-2\lambda_rI)s}\mathrm{d}s = (A-2\lambda_rI)^{-1}\left[e^{(A-2\lambda_rI)s}\right]_0^\infty = (2\lambda_rI-A)^{-1}.$$
    
    Hence, the integral from Theorem~\ref{thm:general_bnd_pdf} is given by 
    $$\int_0^\infty f^{Inf}(s)\prt{1-e^{-2\lambda_rs}}\mathrm{d}s=1-w^TA(2\lambda_rI-A)^{-1}e_1,$$ 
    and the conclusion is direct from Theorem~\ref{thm:general_bnd_pdf}.
\end{proof}

\vspace{0.1cm}
\begin{cor}
    \label{cor:Inf:algebraic}
    The expression below is equivalent to (\ref{eq:Infection:Bound}).
    \begin{equation}
        \label{eq:Infection:Bound:Algebraic}
        \small
        \Ep{C(t)} \geq \frac{N-1}{N^2} \prt{ 1 - \frac{1}{1+\frac{N-1}{2} \frac{\lambda_c}{\lambda_r}} h\prt{N,\tfrac{\lambda_c}{\lambda_r}}} \sigma^2
    \end{equation}
    with 
    \begin{equation}
        \label{eq:Infection:h(N,L)}
        \small
        h(N,L) = \sum_{k=2}^N\prt{\frac{k-1}{N-1} \prod_{j=1}^{k-1}\frac{j(N-j)L}{2+(j+1)(N-j-1)L}}
    \end{equation}

\end{cor}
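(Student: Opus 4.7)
The plan is to unwind the matrix expression $w^T A (2\lambda_r I - A)^{-1} e_1$ from Theorem~\ref{thm:Infection:Bound} into the closed-form sum-of-products in $h(N,L)$. Setting $L = \lambda_c/\lambda_r$ and $B = A/\lambda_r$, one immediately gets $w^T A (2\lambda_r I - A)^{-1} e_1 = w^T B (2I - B)^{-1} e_1$. Letting $v = (2I - B)^{-1} e_1$, the identity $(2I - B) v = e_1$ rewrites as $Bv = 2v - e_1$. Since $w_1 = 0$, taking $w^T$ on both sides yields the key reduction
$$w^T B v \;=\; 2\, w^T v,$$
which removes the need to multiply by $B$ explicitly at the end.

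Next I would solve the bidiagonal system $(2I - B)v = e_1$ recursively. Writing $c_j := j(N-j)L$, row $1$ gives $v_1 = 1/(2 + c_1)$, and for $k \geq 2$ the relation $-c_{k-1} v_{k-1} + (2 + c_k) v_k = 0$ yields $v_k = \tfrac{c_{k-1}}{2+c_k}\, v_{k-1}$. Telescoping and shifting one factor out of the product gives
$$v_k \;=\; \frac{1}{2 + c_1} \prod_{j=1}^{k-1} \frac{c_j}{2 + c_{j+1}},$$
which is exactly the form needed to align with $h(N,L)$, since $c_j = j(N-j)L$ and $c_{j+1} = (j+1)(N-j-1)L$.

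Combining the two ingredients and using $c_1 = (N-1)L$, so that $\tfrac{2}{2 + c_1} = \tfrac{1}{1 + \tfrac{N-1}{2} L}$, I get
$$w^T B v \;=\; 2\sum_{k=2}^{N} \frac{k-1}{N-1}\, v_k \;=\; \frac{2}{2 + (N-1)L}\sum_{k=2}^{N}\frac{k-1}{N-1}\prod_{j=1}^{k-1}\frac{c_j}{2+c_{j+1}} \;=\; \frac{h(N, L)}{1 + \tfrac{N-1}{2} L}.$$
Substituting into (\ref{eq:Infection:Bound}) yields (\ref{eq:Infection:Bound:Algebraic}) and completes the argument.

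The main obstacle is purely bookkeeping: the naive telescoping produces $v_k$ with denominators $\prod_{j=1}^{k}(2 + c_j)$, which does not immediately match the denominators $2 + c_{j+1}$ appearing in $h(N,L)$. Peeling off the factor $(2 + c_1)$ from the product and recognizing it as the normalizing factor $1 + \tfrac{N-1}{2}L$ is the only nontrivial index-shift, and it is what produces the neat prefactor $\bigl(1 + \tfrac{N-1}{2}\tfrac{\lambda_c}{\lambda_r}\bigr)^{-1}$ that parallels the Ping bound (\ref{eq:Ping:bound}).
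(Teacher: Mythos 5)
Your proof is correct and reaches the same closed form by essentially the same computation as the paper: reduce $w^TA(2\lambda_rI-A)^{-1}e_1$ to a telescoping product over the subdiagonal entries and peel off the first diagonal factor $2\lambda_r+(N-1)\lambda_c$ as the Ping-like prefactor. The only difference is one of packaging: you obtain the first column $v=(2I-B)^{-1}e_1$ directly by forward substitution on the lower-bidiagonal system, whereas the paper factors $2\lambda_rI-A=PQ$ and invokes a general formula for the inverse of a unit-diagonal bidiagonal matrix (its Lemma~\ref{lem:inv_bidiag_unit}); likewise your reduction $w^TBv=2w^Tv$ from $(2I-B)v=e_1$ together with $w_1=0$ is exactly the paper's Lemma~\ref{lem:property} specialized to the vector $v$. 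Your version is slightly leaner since it never computes the full inverse, but the two arguments are substantively identical.
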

\vspace{0.1cm}
Corollary~\ref{cor:Inf:algebraic} presents an algebraic expression for the bound (\ref{eq:Infection:Bound}) from Theorem~\ref{thm:general_bnd_pdf}, and is proven in Appendix~\ref{ANNEX:proof:algebraic_Inf}. Detailed empirical explorations suggest that $h(N,L)~\leq~\tfrac{L}{2}~+~\tfrac{L}{2}\sum\nolimits_{k=1}^{N-2}\tfrac{2kL}{2+2kL}$. Using results on bounds in numerical integration, we can then find a compact bound on that expression, and derive the following empirical expression for bounding $\Ep{C(t)}$ from below.
\begin{equation}
    \label{eq:Infection:Bound:relaxed}
    \small
    \frac{N-1}{N^2} \prt{\frac{1}{1+\frac{N-1}{2}\frac{\lambda_c}{\lambda_r}}} \prt{1 + \tfrac{1}{2}\log\prt{\frac{1+(N-2)\tfrac{\lambda_c}{\lambda_r}}{1+\tfrac{\lambda_c}{\lambda_r}}}}\sigma^2
\end{equation}

\subsection{Discussion}
\label{Sec:Infection:discussion}

Fig.~\ref{fig:comp_all_3N_log} compares the bounds derived with the Ping and Infection models, and the expression (\ref{eq:Infection:Bound:relaxed}) based on the Infection bound, in logarithmic scale. The first observation is the improvement of both bounds from the Infection model, which relies on a weaker relaxation, in comparison with the Ping model. That improvement is even more apparent when the rate ratio $\lambda_c/\lambda_r$ increases, since the impact of the corresponding factor gets more important. In particular, the expression (\ref{eq:Infection:Bound:relaxed}) is actually exactly that of the bound from the Ping model multiplied by a logarithmic factor that scales to $\log N$ when $\lambda_c/\lambda_r$ is large. This could relate to the fact that information propagates in one hop in the Ping model, whereas it needs time with the Infection model.

\begin{figure}[thpb]
    \centering
    \includegraphics[width=0.45\textwidth]{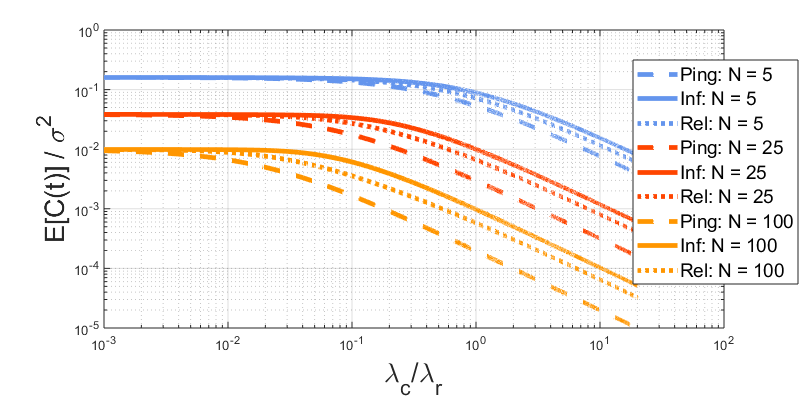}
    \caption{\scriptsize Bounds derived with the Ping model (dashed), the Infection model (plain), and the expression (\ref{eq:Infection:Bound:relaxed}) (dotted) in terms of the rate ratio $\lambda_c/\lambda_r$, for different numbers of agents $N$, and with a unit variance $\sigma^2=1$.}
    \label{fig:comp_all_3N_log}
    \includegraphics[width=0.45\textwidth]{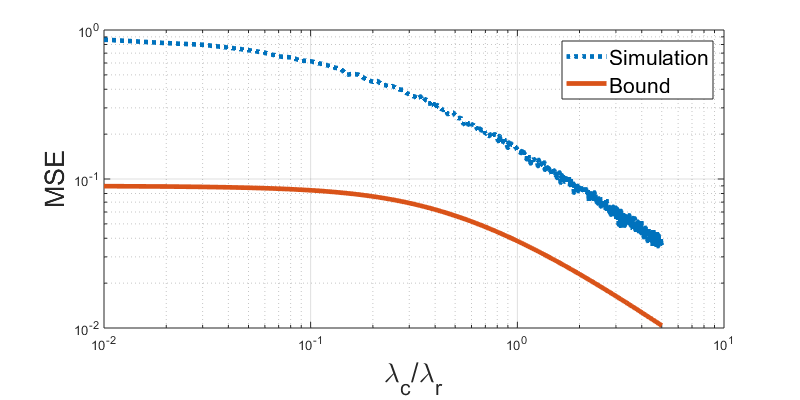}
    \caption{Performances comparison for solving an intrinsic consensus problem with $10$ agents between the bound from the Infection model (plain) and a simulation using a simple Gossip algorithm (dotted), both in logarithmic scale. The simulations are performed with $10 000$ iterations of the simulation over $200$ events, which is empirically shown to be enough for the system to be considered in steady state in that setting.}
    \label{fig:sim_log}
\end{figure}

Observe also in Fig.~\ref{fig:comp_all_3N_log} that the bound follows two distinct regimes: when $\lambda_c/\lambda_r$ is very small, it has a very low impact on the bound, which stays around $\tfrac{N-1}{N^2}\sigma^2$; whereas after some threshold, the bound decays according to $1/(\lambda_c/\lambda_r)$.

Finally, we have compared our bound with the results of the simplest version of average gossip \cite{Avg:Gossip}: agents take their own value $x_i$ as initial estimate $y_i(t)$, and whenever two agents $i,j$ communicate, their estimates become
\begin{equation}\label{eq:gossip}
    y_i(t^+) = y_j(t^+) = \tfrac{y_i(t^-)+y_j(t^-)}{2}.
\end{equation}
This algorithm computes the average in closed system. Observe that it is consistent with our setting presented in Section~\ref{Sec:Statement}, and that we did not make any adaptation to the open character. Fig.~\ref{fig:sim_log} compares the performances of the algorithm \eqref{eq:gossip} with the bound from the Infection model (\ref{eq:Infection:Bound}). It is interesting to notice that even though there is a gap between the curves, the behavior of the error in terms of $\lambda_c/\lambda_r$ is well captured by the bound, especially when that ratio gets large enough. Interestingly, the algorithm relies on only one variable, and does not make use of identifiers nor of any other information provided to the agents as described in Section~\ref{Sec:Statement}, while our bound is valid for algorithms potentially using this information. 
Hence one could wonder about the precise impact of that information and the efficiency gains it allows. 

\section{Conclusions}
\label{Sec:Ccl}

We considered in this paper the possibility of arrivals and departures of agents in the study of multi-agent systems, and highlighted several challenges arising from this property. In particular, it prevents algorithms to converge for many problems, making their analysis challenging.

We focused on the analysis of the performances of intrinsic averaging algorithms for open systems of constant size, for which we derived \textit{fundamental performance limitations} as lower bounds on the Mean Square Error of estimation in steady state. This was done by studying the performances of an algorithm optimal in a context more favorable to the agents than what is usually assumed in multi-agent systems, and relied then on the relaxation of the information spreading mechanism in the system. Two of such relaxations were studied, leading to a conservative but readable bound, and to a tighter but more complex bound.

Possible extensions include the study of more complicated problems (e.g. decentralized optimization \cite{DO:dualAvg}), more structured constraints on the inter-agent communications, and the consideration of variable-size systems. More fundamentally, our bounds purely rely on limitations on the information propagation, and are valid even if the agents have some strong knowledge on the system and use identifiers or heterogeneous algorithms. Stronger lower bounds can be obtained under more restrictive assumptions, e.g. by improving the Infection model from Section~\ref{Sec:Infection} with healing. Yet, it would be interesting to investigate what factors significantly impact the bound: in particular, an anonymous algorithm was already shown in Section~\ref{Sec:Infection:discussion} to exhibit performances not too different from our bound, questioning the impact of anonymity of the agents.

\bibliographystyle{ieeetr}
\bibliography{ref}

\begin{thebibliography}{10}

\bibitem{ApMAS:WSN}
J.~Predd, S.~Kulkarni, and H.~V. Poor, ``Distributed learning in wireless
  sensor networks,'' {\em Signal Processing Magazine, IEEE}, vol.~23,
  pp.~56--69, 07 2006.

\bibitem{ApMAS:TrackingWSN}
R.~Olfati-Saber and N.~F.~Sandell, ``Distributed tracking in sensor networks
  with limited sensing range,'' pp.~3157 -- 3162, 07 2008.

\bibitem{ApMAS:vehicle}
Y.~Cao, W.~Yu, W.~Ren, and G.~Chen, ``An overview of recent progress in the
  study of distributed multi-agent coordination,'' {\em IEEE Transactions on
  Industrial Informatics}, vol.~9, 07 2012.

\bibitem{ApMAS:CTAvgOpinionDynamics}
V.~Blondel, J.~M.~Hendrickx, and J.~Tsitsiklis, ``Continuous-time
  average-preserving opinion dynamics with opinion-dependent communications,''
  {\em SIAM Journal on Control and Optimization}, vol.~48, 07 2009.

\bibitem{ApMAS:Opinion&Confidence}
R.~Hegselmann and U.~Krause, ``Opinion dynamics and bounded confidence models,
  analysis and simulation,'' {\em Journal of Artificial Societies and Social
  Simulation}, vol.~5, 07 2002.

\bibitem{OMAS:Gossip}
J.~M.~Hendrickx and S.~Martin, ``Open multi-agent systems: Gossiping with
  deterministic arrivals and departures,'' in {\em Proceedings of the 56th IEEE
  CDC}, pp.~1094--1101, 09 2016.

\bibitem{OMAS:MAX}
M.~Abdelrahim, J.~M.~Hendrickx, and W.~M. Heemels, ``Max-consensus in open
  multi-agent systems with gossip interactions,'' in {\em Proceedings of the
  56th IEEE CDC}, pp.~4753--4758, 12 2017.

\bibitem{OMAS:sociophysics}
P.~Sen and B.~K. Chakrabarti, {\em Sociophysics: an introduction}.
\newblock Oxford University Press, 2013.

\bibitem{OMAS:OpenP2P}
A.~Giret, V.~Julián, M.~Rebollo, E.~Argente, C.~Carrascosa, and V.~Botti, ``An
  open architecture for service-oriented virtual organizations,'' pp.~118--132,
  09 2010.

\bibitem{ApOMAS:VTL}
O.~K. Tonguz, ``Red light, green light—no light: Tomorrow's communicative
  cars could take turns at intersections,'' {\em IEEE Spectrum}, vol.~55,
  pp.~24--29, Oct 2018.

\bibitem{DO:dualAvg}
J.~C.~Duchi, A.~Agarwal, and M.~J.~Wainwright, ``Dual averaging for distributed
  optimization: Convergence analysis and network scaling,'' {\em IEEE
  Transactions on Automatic Control}, vol.~57, 05 2010.

\bibitem{MISC:UsualStochasticOrder}
A.~M\"uller and D.~Stoyan, {\em Comparison methods for stochastic models and
  risks}.
\newblock Chichester ; New York, NY : Wiley, 2002.

\bibitem{Avg:Gossip}
S.~Boyd, A.~Ghosh, B.~Prabhakar, and D.~Shah, ``Randomized gossip algorithms,''
  {\em IEEE/ACM Trans. Netw.}, vol.~14, pp.~2508--2530, June 2006.

\end{thebibliography}

\appendix

\subsection{Proof of Corollary~\ref{cor:Inf:algebraic}}
\label{ANNEX:proof:algebraic_Inf}

We prove in this Section that the expression (\ref{eq:Infection:Bound}) is equivalent to the following algebraic expression
\begin{equation*}
   \Ep{C(t)} \geq \frac{N-1}{N^2} \prt{ 1 - \frac{1}{1+\frac{N-1}{2} \frac{\lambda_c}{\lambda_r}} h\prt{N,\tfrac{\lambda_c}{\lambda_r}}} \sigma^2
\end{equation*}
with 
\begin{equation*}
    h(N,L) = \sum_{k=2}^N\prt{\frac{k-1}{N-1} \prod_{j=1}^{k-1}\frac{j(N-j)L}{2+(j+1)(N-j-1)L}}.
\end{equation*}
\vspace{0.5cm}

We will use the following standard result on the inverse of bidiagonal matrices.

\begin{lem}
    \label{lem:inv_bidiag_unit}
    The inverse $A^{-1}$ of a bidiagonal matrix with a unit diagonal 
    \begin{equation*}
        A = 
        \begin{bmatrix} 
        1 & 0 & \ldots & 0 & 0\\
        -a_1 & 1 &  & 0 & 0\\
        \vdots &\ddots & \ddots & & \vdots\\
        0 &  & \ddots & 1 & 0\\
        0 & 0 & \ldots & -a_{N-1} & 1
        \end{bmatrix}
    \end{equation*}
is  lower triangular matrix with unit diagonal and
    $$A^{-1}_{i,j} = \prod_{k=j}^{i-1} a_k \ \ (\forall i>j)$$
\end{lem}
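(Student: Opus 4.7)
The plan is to prove the formula by direct verification: write down a candidate matrix $B$ whose entries match the claimed expression and check that $AB=I$. Concretely, let $B$ be the lower-triangular matrix defined by $B_{ii}=1$, $B_{ij}=\prod_{k=j}^{i-1} a_k$ for $i>j$, and $B_{ij}=0$ for $i<j$. Since each row of $A$ has at most two nonzero entries ($A_{ii}=1$ for all $i$, and $A_{i,i-1}=-a_{i-1}$ for $i\geq 2$), the matrix product collapses to the two-term combination $(AB)_{ij} = B_{ij} - a_{i-1}B_{i-1,j}$ for $i\geq 2$, and $(AB)_{1j}=B_{1j}$ for the top row. This is the only row-level computation needed; everything else is bookkeeping.

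The key steps are then a short case analysis on the relative sizes of $i$ and $j$. When $i<j$, both $B_{ij}$ and $B_{i-1,j}$ vanish by upper-triangular zeros, so $(AB)_{ij}=0$. When $i=j$, the first term contributes $B_{ii}=1$ while $B_{i-1,i}=0$, giving $1$ on the diagonal. When $i>j$, the two products telescope: $B_{ij}=\prod_{k=j}^{i-1}a_k$ factors as $a_{i-1}\prod_{k=j}^{i-2}a_k = a_{i-1}B_{i-1,j}$, so the difference is zero. The top row is checked directly.

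As an optional sanity check one could instead write $A=I-L$ with $L$ strictly lower bidiagonal; then $L^N=0$ makes the truncated Neumann series $A^{-1}=\sum_{k=0}^{N-1}L^k$ exact, and an easy induction on $k$ yields $(L^k)_{i,i-k}=\prod_{m=i-k}^{i-1}a_m$ with all other entries of $L^k$ zero, summing to the stated formula. The only delicate points, and what I would be careful to write out explicitly, are the two boundary cases: the edge row $i=1$, where $A_{i,i-1}$ is absent, and the case $i-1=j$ in the telescoping product, where one of the two products is empty and must be interpreted as $1$. Both are routine given the standard empty-product convention, so no real obstacle arises.
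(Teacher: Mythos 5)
Your proof is correct. Note that the paper itself offers no proof of this lemma at all: it is stated in the appendix as a ``standard result'' on inverses of unit-diagonal bidiagonal matrices and then used directly, so there is nothing to compare against. Your direct verification is exactly the right level of argument: since row $i$ of $A$ has only the two nonzero entries $A_{ii}=1$ and $A_{i,i-1}=-a_{i-1}$, the product collapses to $(AB)_{ij}=B_{ij}-a_{i-1}B_{i-1,j}$, and the case analysis (strict upper part vanishes, diagonal gives $1$, and for $i>j$ the products telescope, with the boundary case $i-1=j$ handled by the empty-product convention $B_{jj}=1$) establishes $AB=I$; for a square matrix this one-sided identity suffices. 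Your alternative via $A=I-L$ with $L$ nilpotent and the finite Neumann series $A^{-1}=\sum_{k=0}^{N-1}L^k$, where $(L^k)_{i,i-k}=\prod_{m=i-k}^{i-1}a_m$, is an equally valid and arguably cleaner route that makes the lower-triangular unit-diagonal structure of $A^{-1}$ immediate. Either version would serve as a complete proof of the lemma the paper leaves unproven.
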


\vspace{0.1cm}

We also need the following technical lemma.
\begin{lem}
    \label{lem:property}
    For all matrix $A$, and $\forall \beta \in \mathbb{R}$ such that $(\beta I~-~A)^{-1}$ is well defined, there holds
    \begin{equation}
        A(\beta I-A)^{-1} = (\beta I-A)^{-1}\beta-I
    \end{equation}
\end{lem}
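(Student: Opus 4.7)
The plan is to prove the identity by a direct algebraic manipulation, decomposing $A$ in terms of $\beta I - A$. Specifically, I would write
$$A = \beta I - (\beta I - A),$$
which is an immediate identity valid for any matrix $A$ and any scalar $\beta$. The rest is then mechanical.

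Multiplying this decomposition on the right by $(\beta I - A)^{-1}$, which exists by hypothesis, I would obtain
$$A(\beta I - A)^{-1} = \beta I \cdot (\beta I - A)^{-1} - (\beta I - A)(\beta I - A)^{-1}.$$
The first term on the right simplifies to $\beta (\beta I - A)^{-1}$ since $\beta I$ commutes with everything, and the second term reduces to $I$ by definition of the inverse. This yields the claimed identity
$$A(\beta I - A)^{-1} = \beta (\beta I - A)^{-1} - I,$$
which matches the statement (written as $(\beta I - A)^{-1}\beta - I$, using the fact that a scalar commutes with any matrix).

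There is no real obstacle here: the only subtlety is recognizing the correct additive decomposition of $A$ that exposes the inverse factor. One could equivalently verify the identity by multiplying both sides on the right by $(\beta I - A)$ and checking that $A = \beta I - (\beta I - A)$, which is essentially the same calculation in reverse. No assumption beyond the invertibility of $\beta I - A$ is needed, so the lemma holds in the stated generality.
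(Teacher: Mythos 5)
Your proof is correct and is essentially the same as the paper's: the paper sets $\tilde A = A - \beta I$ and expands $A(\beta I - A)^{-1} = -(\tilde A + \beta I)\tilde A^{-1}$, which is exactly your decomposition $A = \beta I - (\beta I - A)$ up to a sign convention. Nothing further is needed.
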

\begin{proof}
    Let $\tilde A := A-\beta I$, then 
    $$A(\beta I-A)^{-1} = -(\tilde A+\beta I)\tilde A^{-1} = -I-\beta \tilde A^{-1}.$$
    
    The conclusion follows that
    $$-\tilde A^{-1} = (\beta I-A)^{-1}.$$
\end{proof}

\vspace{0.2cm}

We start from the result of Theorem~\ref{thm:Infection:Bound}:
$$\Ep{C(t)} \geq \tfrac{N-1}{N^2} \prt{1-w^TA(2\lambda_rI-A)^{-1}e_1}\sigma^2.$$
Since $(2\lambda_rI-A)$ is bidiagonal and invertible when $\lambda_r\neq0$, one writes $(2\lambda_rI-A) = PQ$ with 
\begin{itemize}
    \item $P$ a diagonal matrix with $P_{ii} = i(N-i)\lambda_c+2\lambda_r$;
    \item $Q$ a bidiagonal matrix of unit diagonal with $Q_{i+1,i}~=~\tfrac{-i(N-i)\lambda_c}{2\lambda_r+(i+1)(N-i-1)\lambda_c}$.
\end{itemize}

From this definition, one has
$$(2\lambda_r-A)^{-1} = Q^{-1}P^{-1}$$ where $P^{-1}_{ii} = \tfrac{1}{2\lambda_r+i(N-i)\lambda_c}$, and where $Q^{-1}$ is obtained using Lemma~\ref{lem:inv_bidiag_unit} with $$a_k~=~\tfrac{k(N-k)\lambda_c}{2\lambda_r+(k+1)(N-k-1)\lambda_c}.$$

From Lemma~\ref{lem:property} and the previous result, there holds
$$w^TA(2\lambda_rI-A)^{-1}e_1 = w^T[Q^{-1}P^{-1}2\lambda_r-I]e_1$$
with
$$w^TIe_1 = w^Te_1 = 0.$$

Then, since $P^{-1}e_1 = P^{-1}_{11}e_1$ and $P^{-1}_{11} = \tfrac{1}{2\lambda_r+(N-1)\lambda_c}$,
$$w^TQ^{-1}P^{-1}2\lambda_re_1 = \tfrac{2\lambda_r}{2\lambda_r+(N-1)\lambda_c}w^TQ^{-1}e_1.$$

From the definition of $Q^{-1}$ and thus of $a_k$, one has then
$$w^TA(2\lambda_rI-A)^{-1}e_1=\tfrac{2\lambda_r}{2\lambda_r+(N-1)\lambda_c}\sum\nolimits_{k=1}^N\prt{\tfrac{k-1}{N-1}\prod\nolimits_{j=0}^{k-1}a_j}.$$

Algebraic manipulations of this result within equation (\ref{eq:Infection:Bound}) from Theorem~\ref{thm:Infection:Bound} allow to conclude.

\end{document}